\algnewcommand{\LineComment}[1]{\State \(\triangleright\) #1}
\newtheorem{theorem}{\bf Theorem}
\let\oldtheorem\theorem
\renewcommand{\theorem}{\oldtheorem\normalfont}
\let\oldproposition\proposition
\renewcommand{\proposition}{\oldproposition\normalfont}
\newtheorem{lemma}{\bf Lemma}
\let\oldlemma\lemma
\renewcommand{\lemma}{\oldlemma\normalfont}
\let\olddefinition\definition
\renewcommand{\definition}{\olddefinition\normalfont}
\newtheorem{remark}{Remark}
\let\oldremark\remark
\renewcommand{\remark}{\oldremark\normalfont}
\begin{document}
\title{\huge  Mobility Management for Heterogeneous Networks: Leveraging Millimeter Wave for Seamless Handover}\vspace{0em}
\author{
\authorblockN{Omid Semiari$^{1}$, Walid Saad$^{2}$, Mehdi Bennis$^3$, and Behrouz Maham$^4$}\\\vspace*{-.5em}
\authorblockA{\small $^{1}$\textcolor{black}{Department of Electrical Engineering, Georgia Southern University, Statesboro, GA, USA, Email: \protect\url{osemiari@georgiasouthern.edu}}\\
	$^{2}$Wireless@VT, Bradley Department of Electrical and Computer Engineering, Virginia Tech, Blacksburg, VA, USA, Email: \protect\url{walids@vt.edu}\\
\small $^{3}$ Centre for Wireless Communications, University of Oulu, Finland, Email: \url{bennis@ee.oulu.fi}\\
$^{4}$Department of Electrical and Electronic Engineering,
Nazarbayev University, Astana, Kazakhstan, Email: \url{behrouz.maham@nu.edu.kz}
}\vspace{-2em}
    \thanks{This research was supported by the U.S. National Science Foundation under Grants CNS-1460316, CNS-1513697, IIS-1633363, and the Academy of Finland CARMA project.}%
  }
\IEEEoverridecommandlockouts
%
\maketitle
\vspace{-1em}
\begin{abstract}
One of the most promising approaches to overcome the uncertainty and dynamic channel variations of millimeter wave (mmW) communications is to deploy dual-mode base stations that integrate both mmW and microwave ($\mu$W) frequencies. In particular, if properly designed, such dual-mode base stations can enhance mobility and handover in highly mobile wireless environments. In this paper, a novel approach for analyzing and managing mobility in joint $\mu$W-mmW networks is proposed. The proposed approach leverages device-level caching along with the capabilities of dual-mode base stations to minimize handover failures and provide seamless mobility. First, fundamental results on the caching capabilities, including caching probability and cache duration, are derived for the proposed dual-mode network scenario. Second, the average achievable rate of caching is derived for mobile users. Then, the impact of caching on the number of handovers (HOs) and the average handover failure (HOF) is analyzed. The derived analytical results suggest that content caching will reduce the HOF and enhance the mobility management in heterogeneous wireless networks with mmW capabilities. Numerical results corroborate the analytical derivations and show that the proposed solution provides significant reductions in the average HOF, reaching up to $45\%$, for mobile users moving with relatively high speeds. \vspace{-0.1cm}
\end{abstract}
\section{Introduction} \label{intro}\vspace{-0cm}

The proliferation of bandwidth-intensive wireless applications such as social networking, high definition video streaming, and mobile TV have drastically strained the capacity of wireless cellular networks. To cope with this traffic increase, several new technologies are anticipated for 5G cellular systems, including: 1) dense deployment of small cell base stations (SBSs), and 2) leveraging the large amount of available bandwidth at \emph{millimeter wave (mmW)} frequencies \cite{Ghosh14}. In fact, SBSs will boost the capacity of wireless networks by reducing the cell sizes and removing the coverage holes. Meanwhile, mmW communications will provide high data rates, by exploiting directional antennas and transmitting over a large bandwidth that can reach up to $5$ GHz. 

However, one of the key practical issues associated with the dense deployment of SBSs is frequent handovers (HOs) which increases the overhead and delay in heterogeneous networks (HetNets). In addition, handover failure (HOF) is more common in HetNets, particularly, for \emph{mobile user equipments (MUEs)} with higher speeds \cite{6384454}. In fact, due to the small and disparate cell sizes in HetNets, MUEs will not be able to successfully finish the HO process by the time they trigger HO and pass a target SBS. 

To enhance mobility management in  HetNets, an extensive body of work has appeared in the literature \cite{Mulle,6654905,1325888,6587998,7247509,Khan1,7562411,7565107,7354528}. \textcolor{black}{One widely adopted scheme to cope with channel quality variations in mobile environment is the dynamic adaptive streaming over HTTP (DASH) protocol. The authors in \cite{Mulle} analyze the performance of DASH in vehicular networks. In \cite{6654905}, a novel multi-user DASH protocol is proposed to enhance quality of experience for MUEs.} \textcolor{black}{ However, the DASH protocol offers content segments with different quality (different data rate) and it selects the content with lower quality when throughput decreases which results in a lower quality-of-service (QoS). Therefore, despite its importance, the DASH protocol alone will not be sufficient to meet the stringent requirements of 5G applications such as HD TV, uncompressed video streaming, or virtual reality (VR).}

 In \cite{1325888}, the authors comprehensively survey mobility management in IP networks. 
The work presented in \cite{6587998} overviews existing approaches for vertical handover decisions in HetNets. In \cite{7247509}, the authors analyze the impact of channel fading on mobility management in HetNets. 
The work in \cite{Khan1} introduces an HO scheme that considers the speed of MUEs to decrease frequent HOs in HetNets. Moreover, the authors in \cite{7562411} propose an HO scheme that supports soft HO by allowing MUEs to connect with both a macrocell base station (MBS) and SBSs. Furthermore, a distributed mobility management framework is proposed in \cite{7565107} which uses multiple frequency bands to decouple the data and control planes. In \cite{7354528}, an HO scheme for mmW networks is proposed in which the MBS acts as an anchor for mmW SBSs to manage control signals. Although interesting, the works in \cite{1325888,6587998,7247509,Khan1,7562411,7565107} do not consider mmW communications. 
 Moreover, \cite{7354528} assumes that line-of-sight (LoS) mmW links are always available and provides no analytical results to capture the directional nature of mmW communications. 

The main contribution of this paper is a novel mobility management framework that addresses critical handover issues, including frequent HOs and HOF in emerging dense wireless cellular networks with mmW capabilities. In fact, we adopt a model that allows MUEs to cache their requested content by exploiting high capacity mmW connectivity whenever available. Thus, the MUEs will be able to use the cached content and avoid performing any HO, while passing SBSs with relatively small cell sizes. In addition, we propose a geometric model to  derive tractable, closed-form expressions for key performance metrics, including the probability of caching, cumulative distribution function of caching duration, and the average data rate for caching at an MUE over a mmW link. Moreover, we provide insight on the achievable gains for reducing the number of HOs and the average HOF, by leveraging caching in mmW-$\mu$W networks. Under practical settings, we show that HOF can be decreased by up to $45 \%$, even for MUEs moving at high speeds.  


The rest of this paper is organized as follows. Section II presents the system model. Section III presents the analysis for caching in mobility management. Performance analysis are provided in Section IV. Section V presents the simulation results and Section VI concludes the paper.

\section{System Model}
Consider a HetNet composed of an MBS and $K$ SBSs within a set $\mathcal{K}$, distributed uniformly across an area. 
The SBSs are equipped with mmW front-ends to serve MUEs over either mmW or $\mu$W frequency bands \cite{7929424}. The dual-mode capability allows to integrate mmW and $\mu$W radio access technologies (RATs) at the medium access control (MAC) layer of the air interface and reduce the delay and overhead for fast vertical handovers between both RATs. 
We consider a set $\mathcal{U}$ of $U$ MUEs that are distributed randomly and move across the considered geographical area during a time frame $T$. Each user $u \in \mathcal{U}$ moves in a  random direction $\theta_u \in \left[0,2\pi \right]$, with respect to the $\theta=0$ horizontal angle, which is assumed fixed for each MUE over a considered time frame $T$. In addition, we consider that an MUE $u$ moves with an  average speed $v_u \in \left[v_{\text{min}},v_{\text{max}}\right]$. The MUEs can receive their requested traffic over either the mmW or $\mu$W band.



\vspace{-.1cm}
\subsection{Channel model}
The large-scale channel effect over mmW frequencies for a link between an SBS $k$ and an MUE $u \in \mathcal{U}$, in dB, is\footnote{The free space path loss model in \eqref{pathloss_mmw} has been adopted in many existing works, such as in \cite{Ghosh14} that carry out real-world measurements to characterize mmW large scale channel effects.}
	\begin{align}\label{pathloss_mmw}
	L(u,k)=20\log_{10}\left(\frac{4\pi r_0}{\lambda}\right)
	\!+\! 10\alpha \log_{10}\left(\frac{r_{u,k}}{r_0}\right)\!+\!\chi, 
	\end{align} 
where \eqref{pathloss_mmw} holds for $r_{u,k}\geq r_0$, with $r_0$ and $r_{u,k}$  denoting, respectively, the reference distance and distance between the MUE $u$ and SBS $k$. In addition, $\alpha$ is the path loss exponent, $\lambda$ is the wavelength at carrier frequency $f_c = 73$ GHz over the E-band, and $\chi$ is a Gaussian random variable with zero mean and variance $\xi^2$. The path loss parameters $\alpha$ and $\xi$ will naturally have different values, depending on whether the mmW link is LoS or non-LoS (NLoS). Over the $\mu$W frequency band, the path loss model follows \eqref{pathloss_mmw}, however, with parameters that are specific to sub-6 GHz frequencies. 

An illustration of the considered HeNet is shown in Fig. \ref{Model1}. The coverage for each SBS at the $\mu$W frequency is shown based on the maximum received signal strength (max-RSS) criteria. In addition, white spaces in Fig. \ref{Model1} show the areas that are covered solely by the MBS. Here, we observe that shadowing effect can adversely increase the ping-pong effect for MUEs. To cope with this issue, the 3GPP standard suggests L1/L3 filtering which basically applies averaging to RSS samples, as detailed in \cite{7247509}.  
\begin{figure}[!t]
	\centering
	\centerline{\includegraphics[width=7cm]{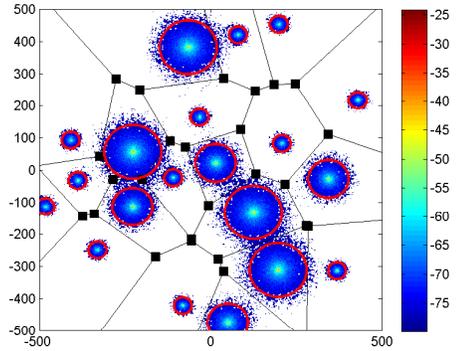}}\vspace{-.5cm}
	\caption{\footnotesize SBSs coverage with RSS threshold of $-80$ dB. Red circles show the simplified cell boundaries.}\vspace{-0.2cm}
	\label{Model1}
\end{figure}\vspace{0em}
\subsection{Antenna model and configuration}
To overcome the excessive path loss at $f_c$, the MUEs are equipped with electronically steerable antennas which allow them to achieve beamforming gains at a desired direction. The antenna gain pattern for MUEs follows the simple and widely-adopted sectorized pattern which is given by: 
\begin{align}\label{gainMUE}
G(\theta)=\begin{cases}
G_{\text{max}}, &\text{if} \,\,\,\,\,\theta <|\theta_m|,\\
G_{\text{min}}, &\text{otherwise},
\end{cases}
\end{align}
where $\theta$ and $\theta_m$ denote, respectively, the azimuth angle and the antennas' main lobe beamwidth. Moreover, $G_{\text{max}}$ and $G_{\text{min}}$ denote, respectively, the antenna gain of the main lobe and side lobes. For MUEs, we use a model similar to the  sectorized pattern in \eqref{gainMUE}. However we allow each SBS to form $N$ beams, either by using $N$ antenna arrays or forming multi-beam beamforming. The beam patten configuration of an MBS is shown in Fig. \ref{model2}, where $N$ equidistant beams in $\theta \in \left[0, 2\pi \right]$ are formed. To avoid the complexity and overhead of beam-tracking for mobile users, the location of the SBSs' beams in azimuth is fixed. In fact, an MUE can connect to an SBS over a mmW link, if the MUE traverses the area covered by the mmW beams of this SBS.  It is assumed that for a desired link between an SBS $k$ and an MUE $u$, the overall transmit-receive gain is $\psi_{u,k}=G_{\text{max}}^2$. 
\vspace{-.1cm}
\subsection{Traffic model}\vspace{-.1cm}
Video streaming is one of the wireless services with most stringent QoS requirement. The QoS achieved by such services is severely impacted by the delay that results from frequent handovers in HetNets. In addition, HOFs can significantly degrade the performance by making frequent service interruptions. Therefore, our goal is to enhance mobility management for MUEs that request video traffic. Each video content is partitioned into small segments, each of size $B$ bits. \textcolor{black}{The network incorporates caching \cite{7875131,7412759} over the mmW frequency band to store incoming video segments at the MUE's storage, whenever a high capacity mmW connection is available.} Considering a cache size of $\Psi_u$ for an arbitrary MUE $u$, associated with an SBS $k$, the number of video segments that can be cached at MUE $u$ will be
\begin{align}\label{traffic1}
	M^c(u,k) = \min\Big\lbrace\bigg\lfloor \frac{\bar{R}^c(u,k)t_u^c}{B}\bigg\rfloor, \frac{\Psi_u}{B}\Big\rbrace,
\end{align}
where $\lfloor . \rfloor$ and $\min\lbrace . ,. \rbrace$ denote, respectively, the floor and minimum operands. In addition, $t_u^c$ is the \emph{caching duration} which is equal to the time needed for an MUE $u$ to traverse a mmW beam at its serving SBS. Considering the small green triangle in Fig. \ref{model2} as the current location of an MUE crossing a mmW beam, caching duration will be $t_u^c=r_u^c/v_u$. Moreover, $\bar{R}^c(u,k)$ is the \emph{average achievable rate} for the MUE $u$ during $t_u^c$. Given $M^c(u,k)$ and the  video play rate of $Q$, specified for each video content, the distance an MUE $u$ can traverse with speed $v_u$, while playing the cached video content will be 
\begin{align}\label{traffic2}
d^c(u,k) = \frac{M^c(u,k)}{Q}v_u.
\end{align}
In fact, the MUE can traverse a distance $d^c(u,k)$ by using the cached video content without requiring an HO to any of the target cells. Meanwhile, the location information and control signals, such as paging, can be handled by the MBS over the $\mu$W frequency band. \textcolor{black}{ Here, we note that buffering segments which is conventionally used in sub-6 GHz systems can also be employed within the proposed HO muting scheme. However, by using buffering, the network can only download a limited number of segments in advance. In contrast, with caching, an MUE can store an entire or a large portion of a content. Thus, by using caching and not just buffering, a mobile MUE will be able to traverse a longer distance $d^c$, while playing the stored content. 
 Furthermore, by using the mmW frequency band, one can leverage the substantial amount of bandwidth which is suitable to  realize the idea of  fully-fledged content caching in practice.}
\vspace{-.1cm}
\subsection{Handover procedure and performance metrics}\vspace{-.0cm}
The HO process in the 3GPP standard proceeds as follows \cite{3gpp}: 1) Each MUE will do a cell search every $T_s$ seconds that can be configured by the network or directly by MUEs, 2) If any target cell offers an RSS plus a hysteresis that is higher than the serving cell, even after L1/L3 filtering of input RSS samples, the MUE will wait for a time-to-trigger (TTT) of $\Delta T$ seconds to measure the average RSS from the target cell, 
3) If the average RSS is higher than that of the target SBS during TTT, the MUE triggers HO and sends a measurement report to its serving cell. 
The averaging over the TTT duration will reduce the ping-pong effect resulting from instantaneous channel state information (CSI) variations, and 4) HO will be executed after the serving SBS sends HO information to the target SBS.

In our model, we modify the above HO procedure to leverage the caching capabilities of the MUEs during mobility. Here, we let each MUE dynamically determine $T_s$, depending on the number of remaining video segments $M$ in the cache, the video play rate, and the MUE's speed. That is, an MUE $u$ will mute the cell search while $M/Q$ is greater than $\Delta T$. In this way, the MUE will have $\Delta T$ seconds to search for a target SBS before the cached content runs out. 

Next, we consider the HOF as a key performance metric for an HO procedure. One of the main reasons for the potential increase in HOF is due to the relatively small cell sizes in HetNets, compared to the MBS coverage. In fact, HOF occurs if the time-of-stay (ToS) for an MUE is less than the minimum ToS (MTS) required for performing a successful HO. That is,
\begin{align}\label{HOF1}
	\gamma_{\text{HOF}}(u,k) = \begin{cases}
	1, \,\,\,\,\,\, \text{if}\,\,\, t_{u,k}<t_{\text{MTS}},\\
	0, \,\,\,\,\,\, \text{otherwise},
	\end{cases}
\end{align}
where $t_{u,k}$ is the ToS for MUE $u$ to pass across SBS $k$ coverage. Although a short ToS may not be the only cause for an HOF, it becomes  critical for small cell setting and MUEs with high speeds \cite{3gpp}.
Next, we adopt a geometric framework to analyze the caching opportunities, in terms of the caching duration $t^c$, and the average achievable rate $\bar{R^c}$, for MUEs moving at random directions in joint mmW-$\mu$W HetNets. \vspace{-.0cm}
\section{Analysis of Mobility Management with Caching Capabilities}
\begin{figure}[!t]
	\centering
	\centerline{\includegraphics[width=5cm]{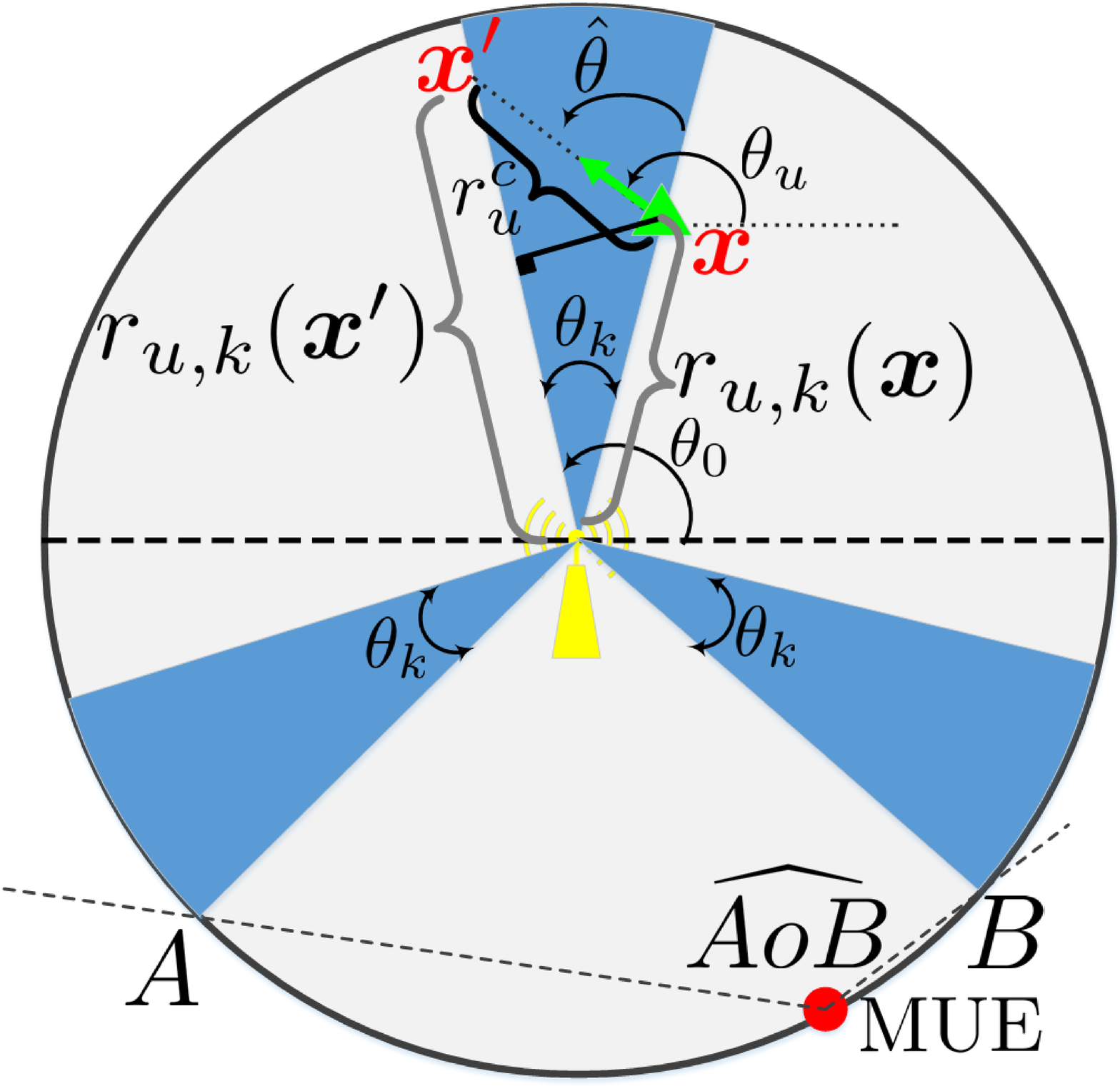}}\vspace{-.2cm}
	\caption{\footnotesize Antenna beam configuration of a dual-mode SBS with $N_k=3$. Blue shaded areas show the mmW beams.}\vspace{-.3cm}
	\label{model2}
\end{figure}\vspace{0em}
First, we investigate the probability of serving an arbitrary MUE over mmW frequencies by a dual-mode SBS. \vspace{-.1cm}
\subsection{Probability of mmW coverage}
In Fig. \ref{model2}, the small circle shows the intersection of an MUE $u$'s trajectory with the coverage area of an arbitrary SBS $k$. In this regard, $\mathbbm{P}^c_{k}(N_k,\theta_k)$ represents the probability that MUE $u$ with a random direction $\theta_u$ and speed $v_u$ crosses the mmW coverage areas of SBS $k$. From Fig. \ref{model2}, we observe that the MUE will pass through the mmW coverage area only if the MUE's direction is inside the angle $\widehat{AoB}$. Hence, we can state the following.

\begin{theorem}\label{prop1}
If the mmW front-end of SBS $k$ has formed $N_k\geq 2$ main lobes, each with a beamwidth $\theta_k>0$, the probability of mmW coverage will be given by:
\begin{align}\label{prop1eq}
\!\!\!\mathbbm{P}^c_{k}(N_k,\theta_k) \!=\!\!\left[\frac{N_k \theta_k}{2\pi}\right] \!\!+\!\! \left[1\!-\!\frac{N_k \theta_k}{2\pi}\right]\!\!\left[\frac{1}{2}\!\left(\!1-\frac{1}{N}\right)\!+\frac{\theta_k}{4\pi}\right].	
\end{align}
\end{theorem}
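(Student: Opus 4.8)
The plan is to treat this as a geometric-probability statement and evaluate it by the law of total probability, conditioning on where the MUE's trajectory enters the disk-shaped coverage area of SBS $k$. I model the coverage area as a disk centered at the SBS, partitioned in azimuth into $N_k$ equidistant beam sectors of angular width $\theta_k$ (total beam arc $N_k\theta_k$) separated by gaps of total arc $2\pi - N_k\theta_k$. The two sources of randomness are the entry point, taken uniform on the boundary circle, and the heading $\theta_u$, taken uniform on $[0,2\pi]$; the event of interest is that the straight chord traced inside the disk meets at least one beam sector, and its conditional probability is the favorable angular measure of $\theta_u$, i.e.\ the angle $\widehat{AoB}$ in Fig.~\ref{model2}, divided by $2\pi$.

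First I would split on the entry point. With probability $N_k\theta_k/2\pi$ the entry point lies on the arc of some beam; then the MUE already sits inside mmW coverage at entry, so every heading is favorable and this case contributes probability $1$, yielding the first bracket $[N_k\theta_k/2\pi]$. The complementary event, entry in a gap, has probability $1 - N_k\theta_k/2\pi$, which is the second bracketed prefactor, and all the remaining work is in computing the conditional crossing probability $q$ there.

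To evaluate $q$ I would argue in two stages. Half of the headings (angular measure $\pi$) point outward, and since the entry point lies in a gap these contribute nothing; this is the source of the leading factor $\tfrac12$. For the inward headings I would use the tangent--chord (inscribed-angle) relation: a heading making tangent-chord angle $\gamma\in(0,\pi)$ produces an exit point whose subtended central arc is $2\gamma$, so a uniform inward heading yields a uniform exit point, and the chord's covered azimuthal range is the minor arc between entry and exit, i.e.\ an arc anchored at the entry point of length $L$ with $L$ uniform on $(0,\pi)$. The chord meets a beam exactly when $L$ exceeds the azimuthal distance $D$ from the entry point to the nearest beam edge in that direction; since the entry point is uniform in a gap of width $2\pi/N_k - \theta_k$, one has $D$ uniform on $(0,\, 2\pi/N_k - \theta_k)$. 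Averaging $\Pr(L>D)$ over $D$ gives the inward favorable fraction $1 - \tfrac{1}{N_k} + \tfrac{\theta_k}{2\pi}$ (reading the $1/N$ in \eqref{prop1eq} as $1/N_k$), and multiplying by the outward factor $\tfrac12$ delivers $q = \tfrac12\bigl(1-\tfrac1{N_k}\bigr) + \tfrac{\theta_k}{4\pi}$. Substituting the two cases into the total-probability expansion $P = 1\cdot\tfrac{N_k\theta_k}{2\pi} + q\,\bigl(1-\tfrac{N_k\theta_k}{2\pi}\bigr)$ then reproduces \eqref{prop1eq}.

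The step I expect to be the crux is the inward computation: establishing that a uniform heading makes the exit point uniform (hence the covered arc length uniform on $(0,\pi)$), and correctly identifying ``chord meets a beam'' with the single inequality $L>D$ together with the distribution of $D$. Care is also needed at the boundaries of the case split---counting an entry point that lands exactly on a beam arc as guaranteed coverage, and confirming the outward headings really contribute zero in the gap case---since these modeling choices are precisely what produce the clean coefficients $\tfrac12$ and $\tfrac{\theta_k}{4\pi}$.
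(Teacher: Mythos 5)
Your proposal is correct, and its outer structure is exactly that of the paper's Appendix~A: both proofs condition on where the trajectory meets the cell boundary, assign probability $\frac{N_k\theta_k}{2\pi}$ to landing on a beam arc (coverage is then immediate), and multiply the complementary probability by a conditional crossing probability $q=\frac{1}{2}\left(1-\frac{1}{N_k}\right)+\frac{\theta_k}{4\pi}$; you also reproduce the paper's implicit convention of measuring favorable headings against the full $2\pi$ (outward headings count as failures), which is precisely what produces the factor $\frac{1}{2}$. The only genuine difference is how $q$ is evaluated. The paper gets it in one stroke from the inscribed-angle theorem: from an entry point $o$ in a gap, the chord hits a beam iff it terminates on the arc complementary to the gap arc, so the favorable heading set is the inscribed angle $\widehat{AoB}=\frac{1}{2}\left[2\pi-\left(\frac{2\pi}{N_k}-\theta_k\right)\right]$, and this angle is the same for every entry point in the gap, so no averaging over the entry position is needed. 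You reach the same number by a longer route: the tangent--chord relation (the limiting case of the same inscribed-angle theorem) maps a uniform inward heading to a uniform exit point, and you then compute $\Pr(L>D)$ with $D$ uniform on the gap width $g=\frac{2\pi}{N_k}-\theta_k$. Your averaging over $D$ is in fact superfluous: writing $d_1,d_2$ for the distances from the entry point to the two gap ends, the inward favorable fraction is $\frac{(\pi-d_1)+(\pi-d_2)}{2\pi}=1-\frac{g}{2\pi}$, already position-independent because $d_1+d_2=g$ --- which is exactly what the inscribed-angle formulation makes manifest. Two small points in your favor: you make explicit the convexity fact the paper leaves unstated (a chord exiting through the gap's own arc never touches a beam), and you should note that $\Pr(L>D\mid D)=(\pi-D)/\pi$ requires $g\le\pi$, which is guaranteed by the hypotheses $N_k\ge 2$ and $\theta_k>0$.
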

 \begin{proof}
 See Appendix A.
 \end{proof}
 We can verify \eqref{prop1eq} by considering an example scenario with $N_k=3$ and $\theta_k = \frac{2\pi}{3}$. For this example, \eqref{prop1eq} results in
 $\mathbbm{P}^c_{k}(N_k,\theta_k) =1$ which correctly captures the fact that the entire cell is covered by mmW beams. 
\vspace{-0.2cm}
\subsection{Cumulative distribution function of the caching duration}
To enable an MUE to use the cached content while not being associated to an SBS, it is critical to assess the distribution of caching duration $t^c$ for an arbitrary MUE with a random direction and speed. In this regard, consider the small green triangle in Fig. \ref{model2}, which represents the location of an arbitrary MUE $u$, $\boldsymbol{x_u} = \left(x_u, y_u\right) \in \mathbbm{R}^2$, crossing a mmW beam. First, we note that the geometry of a mmW beam for an SBS can be defined by the location of SBS, as well as the sides of the beam angle. Without loss of generality, assume that the SBS of interest is located at the center, such that $\boldsymbol{x}_k = (0,0)$. Therefore, the two sides of the beam angle will be given by
\begin{align}\label{beamangle1}
y = x\tan(\theta_0-\theta_k), y= x\tan(\theta_0), \,\,\,\, x>0.
\end{align}
Assuming that
the MUE $u$ is currently located on the angle side $x = y\cos(\theta_0-\theta_k)$, as shown by the small triangle in Fig. \ref{model2}, then $\theta_0$ in \eqref{beamangle1} will be $\theta_0 = \arccos\left(\frac{x_u}{r_{u,k}(\boldsymbol{x_u})}\right)+\theta_k$, where $r_{u,k}(\boldsymbol{x})=\sqrt{x_u^2 + y_u^2}$. Hereinafter, we will use the parameter $\theta_0$ to simplify our  analysis. Let $F_{t^c}(.)$ be the cumulative distribution function (CDF) of the caching duration $t^c$. Thus,
\begin{align}\label{caching1}
F_{t_u^c}(t_0) = \mathbbm{P}(t_u^c \leq t_0) = \mathbbm{P}(r_u^c \leq v_u t_0),
\end{align}
where $r_u^c$ denotes the distance that MUE $u$ will traverse across the mmW beam, as shown in Fig. \ref{model2}. Given the location of MUE $\boldsymbol{x}_u$, the minimum possible distance to traverse, $r_u^{\text{min}}$, is
\begin{align}\label{caching2}
r_u^{\text{min}} = \frac{\big|x_u\tan\theta_0 -y_u\big|}{\sqrt{1+ \tan^2\theta_0}}.
\end{align}
In fact, \eqref{caching2} gives the distance of the point $\boldsymbol{x}_u$ from the beam angle side $y = x\tan(\theta_0)$. If $r_u^{\text{min}} >v_u t_0$, then $F_{t_u^c}(t_0) = 0$. Therefore, for the remainder of this analysis we consider $r_u^{\text{min}} \leq v_u t_0$. Next, let $\boldsymbol{x}_u'$ denote the intersection of the MUE's path with line $y = x\tan(\theta_0)$. It is easy to see that $\boldsymbol{x}_u' = \left(x_u + r^c_u \cos\theta_u,y_u + r^c_u\sin\theta_u\right)$. Hence, $y_u + r^c_u\sin\theta_u = \left[x_u + r^c_u \cos\theta_u\right]\tan\theta_0$, and $r^c_u$, i.e., the distance that MUE $u$ traverses during the caching duration $t^c$ is given by: 
\begin{align}\label{caching3}
r^c_u =v_ut_u^c= \frac{y_u-x_u\tan\theta_0}{\tan\theta_0 \cos \theta_u-\sin \theta_u}.
\end{align}
Next, from \eqref{caching1} and \eqref{caching3}, the CDF can be written as  
\begin{align}\label{caching4}
F_{t_u^c}(t_0)=\mathbbm{P}\left(\frac{y_u-x_u\tan\theta_0}{\tan\theta_0 \cos \theta_u-\sin \theta_u} \leq v_u t_0\right).
\end{align}
 Using the geometry shown in Fig. \ref{model2}, we find the CDF of caching duration as follows:
 \begin{lemma}\label{prop2}
 The CDF of caching duration, $t^c$, for an arbitrary MUE $u$ with speed $v_u$ is given by
 \begin{align}\label{caching5}
 F_{t^c}(t_0)&= \frac{1}{\pi-\theta_k}\bigg(\arccos\left(\frac{r_u^{\text{min}}}{v_ut_0}\right)\\\notag
 &+\min\bigg\lbrace \arccos\left(\frac{r_u^{\text{min}}}{r_{u,k}(\boldsymbol{x})}\right), \arccos\left(\frac{r_u^{\text{min}}}{v_ut_0}\right)\bigg\rbrace\bigg).
 \end{align}
 \end{lemma}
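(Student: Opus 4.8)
The plan is to treat the position $\boldsymbol{x}_u$ as fixed and to regard the random travel direction $\theta_u$ as the only source of randomness, so that \eqref{caching4} becomes a statement about the angular measure of the set of directions satisfying $r_u^c \le v_u t_0$. First I would recast the exact expression \eqref{caching3} in purely geometric form: since $r_u^{\text{min}}$ in \eqref{caching2} is the perpendicular distance from $\boldsymbol{x}_u$ to the far beam side $y = x\tan\theta_0$, elementary right-triangle trigonometry gives $r_u^c = r_u^{\text{min}}/\cos\phi$, where $\phi$ is the angle between the travel direction $\theta_u$ and the perpendicular dropped from $\boldsymbol{x}_u$ onto that side. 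The event $\{r_u^c \le v_u t_0\}$ is then equivalent to $\cos\phi \ge r_u^{\text{min}}/(v_u t_0)$, i.e. $|\phi| \le \arccos\!\big(r_u^{\text{min}}/(v_u t_0)\big) =: \beta$, which already explains the appearance of the $\arccos$ terms in \eqref{caching5}.

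The second step is to pin down the range of admissible directions, which fixes the normalization. A straight path from a point on the near side $y = x\tan(\theta_0-\theta_k)$ actually traverses the beam only if it both points into the wedge interior and crosses the far side. A direct half-plane argument shows that these two requirements intersect in an angular interval of total width $\pi-\theta_k$, which is precisely the denominator in \eqref{caching5}. Conditioning the uniformly distributed direction $\theta_u$ on this crossing event leaves it uniform on that interval, so $F_{t^c}(t_0)$ is exactly the angular measure of the favorable directions divided by $\pi-\theta_k$.

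The key and slightly delicate step is to intersect the favorable cone $|\phi| \le \beta$ with the admissible interval and to observe that the perpendicular direction splits that interval \emph{asymmetrically}. On the side toward the direction parallel to the far beam side the available angular room is $\pi/2$, and since $\beta = \arccos(\cdot) \le \pi/2$ always, this side contributes the full $\beta = \arccos\!\big(r_u^{\text{min}}/(v_u t_0)\big)$, uncapped. On the opposite side the extreme admissible direction points from $\boldsymbol{x}_u$ toward the SBS at the vertex, at distance $r_{u,k}(\boldsymbol{x})$; the corresponding travel distance is exactly $r_{u,k}(\boldsymbol{x})$, so the angular room there equals $\arccos\!\big(r_u^{\text{min}}/r_{u,k}(\boldsymbol{x})\big)$, which one checks is $\pi/2-\theta_k$. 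Hence this side contributes $\min\{\arccos(r_u^{\text{min}}/r_{u,k}(\boldsymbol{x})),\beta\}$, whichever of the cone half-width or the available room is smaller. Summing the two contributions and dividing by $\pi-\theta_k$ yields \eqref{caching5}. The main obstacle I anticipate is the bookkeeping here: correctly orienting the perpendicular, verifying that one half of the cone is never clipped while the other is, and confirming the identification of the vertex-ward angular room with $\arccos(r_u^{\text{min}}/r_{u,k}(\boldsymbol{x}))$. The $\arccos$ algebra and the half-plane counting are routine; the geometry of \emph{which} side carries the $\min$ is where an error is easiest to make.
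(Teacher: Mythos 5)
Your proposal is correct and follows essentially the same route as the paper's Appendix B: both arguments measure the cone of favorable directions of half-width $\arccos\left(r_u^{\text{min}}/(v_u t_0)\right)$ around the perpendicular dropped onto the far beam side, normalize by the admissible angular range $\pi-\theta_k$, and account for clipping of one half of the cone at the vertex-ward direction, where the available room is $\arccos\left(r_u^{\text{min}}/r_{u,k}(\boldsymbol{x})\right)=\pi/2-\theta_k$. The only difference is presentational: the paper splits into two cases (two vs.\ one intersection points of the length-$v_u t_0$ segments with the far side) and merges them via the $\min$, whereas you sum per-side contributions and make explicit, correctly, which side carries the $\min$ and why the other side is never clipped.
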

 \begin{proof}
 	 See Appendix B.
 \end{proof}
 The CDF of $t^c$ is shown in Fig. \ref{CDF1} for different MUE distances from the serving SBS. Fig. \ref{CDF1} shows that as the MUE is closer to the SBS, $t^c$ takes smaller values with higher probability which is expected, since the MUE will traverse a shorter distance to cross the mmW beam.
\section{Performance Analysis of the Proposed Caching-enabled Mobility Management Scheme}
\begin{figure}[!t]
	\centering
	\centerline{\includegraphics[width=8cm]{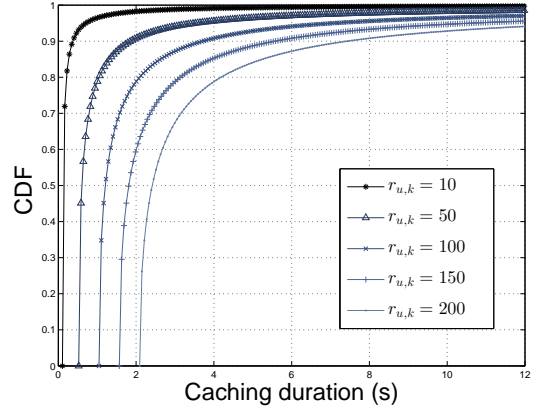}}\vspace{-.2cm}
	\caption{\footnotesize CDF of caching duration $t^c$.}\vspace{-0.3cm}
	\label{CDF1}
\end{figure}
Next, we analyze the average achievable rate for content caching, for an MUE with speed $v_u$, direction $\theta_u$, and initial distance $r_{u,k}(\boldsymbol{x})$ from the serving dual-mode SBS. In addition, we evaluate the impact of caching on mobility management. 
\subsection{Average Achievable Rate for Caching}
The achievable rate of caching follows
\begin{align}\label{rate1}
	\!\!R^c(u,k) = \frac{1}{v_ut_u^c}\int_{r_{u,k}(\boldsymbol{x})}^{r_{u,k}(\boldsymbol{x}')}\!\!\!\!\!\!w\log\left(1+ \frac{\beta P_t\psi r_{u,k}^{-\alpha}}{wN_0}\right)dr_{u,k},
\end{align}
where $\beta =(\frac{\lambda}{4\pi r_0})^2r_0^{\alpha}$. The integral in \eqref{rate1} is taken over the line with length $r_u^c$ that connects the MUE location  $\boldsymbol{x}$ to $\boldsymbol{x}'$, as shown in Fig. \ref{model2}. 
\begin{theorem}
	The average achievable rate for an MUE $u$ served by an SBS $k$, $\bar{R}^c(u,k)$, is given by 
	\begin{align}\label{rate2}
		\!\!\!\!\!\!	&\bar{R}^c(u,k) = \mathbbm{P}^c_{k}(N_k,\theta_k)R^c(u,k),\\\label{rate18}
			&=\delta_2 \int_{f(\theta_k)}^{f(0)}\frac{1}{f^2(\theta)}\log\left(1+\delta_1f^{\alpha}(\theta)\right)df(\theta),\\\notag
			& \stackrel{\text{(a)}}{=} \frac{\delta_2}{\ln(2)}\! \bigg[2\sqrt{\delta_1}\arctan(\sqrt{\delta_1}f(\theta_k))\!-\!\frac{\ln(\delta_1f^2(\theta_k)+1)}{f(\theta_k)},\\\label{rate19}
			&-\! 2\sqrt{\delta_1}\arctan(\sqrt{\delta_1}f(0))\!+\!\frac{\ln(\delta_1f^2(0)+1)}{f(0)}\bigg],
	\end{align}
	where $\delta_1 = \beta P_t\psi(r_{u,k}(\boldsymbol{x})\sin\hat{\theta})^{-\alpha}/wN_0$. Moreover, $\delta_2=wr_{u,k}(\boldsymbol{x})\sin\hat{\theta}\mathbbm{P}^c_{k}(N_k,\theta_k)/v_ut^c$, and $\hat{\theta} = \theta_u-\theta_0+\theta_k$. For (a) to hold, we set  $\alpha=2$ which is a typical value for the path loss exponent of LoS mmW links \cite{Ghosh14}.
\end{theorem}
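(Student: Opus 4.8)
The plan is to establish \eqref{rate2} by a conditioning argument, then reduce the line integral in \eqref{rate1} to the single-variable integral \eqref{rate18} through a geometric change of variables, and finally evaluate that integral in closed form for $\alpha=2$.

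First I would argue \eqref{rate2}. The quantity $R^c(u,k)$ in \eqref{rate1} is the rate \emph{conditioned} on the MUE actually traversing a mmW beam of SBS $k$ (it is the capacity averaged over the crossing path of length $r_u^c = v_u t_u^c$). By Theorem~\ref{prop1} this crossing event has probability $\mathbbm{P}^c_{k}(N_k,\theta_k)$, and otherwise no caching over mmW takes place, contributing zero rate. Hence, by the law of total expectation, the average achievable rate is the product $\mathbbm{P}^c_{k}(N_k,\theta_k)\,R^c(u,k)$, which is exactly \eqref{rate2}.

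The core step is to transform the integral over $r_{u,k}$ in \eqref{rate1} into an integral over $f(\theta)$. Using the geometry of the beam crossing in Fig.~\ref{model2}, I would drop a perpendicular from the SBS (at the origin) onto the MUE's straight path; its length is the constant $h = r_{u,k}(\boldsymbol{x})\sin\hat{\theta}$, where $\hat{\theta}=\theta_u-\theta_0+\theta_k$ is the angle between the path and the initial SBS--MUE segment. For a generic point on the path at distance $r_{u,k}$ from the SBS, the right-triangle (equivalently, law-of-sines) relation gives $r_{u,k}=h/f(\theta)$, with $f(\theta)$ the sine of the corresponding angle. Differentiating yields $dr_{u,k}=-\,h\,f^{-2}(\theta)\,df(\theta)$, while substituting $r_{u,k}=h/f(\theta)$ into the SNR term gives $\beta P_t\psi r_{u,k}^{-\alpha}/(wN_0)=\delta_1 f^{\alpha}(\theta)$ with $\delta_1$ as stated. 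Absorbing $w$, $h$, $\mathbbm{P}^c_{k}$, and $1/(v_u t^c)$ into $\delta_2=w\,r_{u,k}(\boldsymbol{x})\sin\hat{\theta}\,\mathbbm{P}^c_{k}(N_k,\theta_k)/(v_u t^c)$, and noting that the minus sign from $dr_{u,k}$ is cancelled by reversing the integration limits (as $r_{u,k}$ grows, $f$ shrinks), delivers \eqref{rate18}, with the endpoints $f(\theta_k)$ and $f(0)$ corresponding to the two sides of the beam.

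Finally, with $\alpha=2$ the integrand becomes $f^{-2}\log(1+\delta_1 f^2)$. I would convert the Shannon $\log_2$ to the natural logarithm, extracting the factor $1/\ln(2)$, and integrate by parts with $u=\ln(1+\delta_1 f^2)$ and $dv=f^{-2}df$. This produces the antiderivative $2\sqrt{\delta_1}\arctan(\sqrt{\delta_1}f)-\ln(1+\delta_1 f^2)/f$, whose evaluation at the limits $f(\theta_k)$ and $f(0)$ gives \eqref{rate19}. I expect the geometric change of variables to be the only real obstacle: correctly identifying the perpendicular length $h=r_{u,k}(\boldsymbol{x})\sin\hat{\theta}$, the angle $\hat{\theta}$, and the endpoints $f(0)$ and $f(\theta_k)$ requires care with Fig.~\ref{model2}, whereas both the conditioning argument for \eqref{rate2} and the closed-form integration for \eqref{rate19} are routine.
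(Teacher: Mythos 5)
Your proposal is correct and follows essentially the same route as the paper's proof: the same conditioning argument gives \eqref{rate2}, the same law-of-sines geometry (equivalently, the perpendicular-foot relation $r_{u,k}=r_{u,k}(\boldsymbol{x})\sin\hat{\theta}/\sin(\hat{\theta}-\theta)$) reduces \eqref{rate1} to \eqref{rate18}, and the same integration by parts with antiderivative $2\sqrt{\delta_1}\arctan\left(\sqrt{\delta_1}f\right)-\ln\left(1+\delta_1 f^2\right)/f$ yields \eqref{rate19} for $\alpha=2$. The only difference is cosmetic: you substitute $r_{u,k}=h/f$ with $h=r_{u,k}(\boldsymbol{x})\sin\hat{\theta}$ in a single step, whereas the paper first changes the integration variable to the angle $\theta$ (its intermediate equation \eqref{rate21}) and then sets $f(\theta)=\sin(\hat{\theta}-\theta)$.
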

\begin{proof}
	Theorem \ref{prop1} implies that with probability $1-\mathbbm{P}^c_{k}(N_k,\theta_k)$, only $\mu$W coverage is available for an MUE. Therefore, average achievable rate for caching over the mmW frequencies is given by \eqref{rate2}. To simplify \eqref{rate2}, we have
	\begin{align}\label{rate20}
		r_{u,k}\cos\theta = r_{u,k}(\boldsymbol{x}) + r_u\cos\hat{\theta}, \,\, 	r_{u,k}\sin\theta = r_u\sin\hat{\theta},
	\end{align}
	where $\hat{\theta} = \theta_u-\theta_0+\theta_k$ and $\theta$ is an angle between the line connecting MUE to SBS, ranging from $0$ to $\theta_k$. Moreover, $r_u$ is the current traversed distance, with $r_u = r_u^c$ once the MUE reaches $\boldsymbol{x}'$ by the end of caching duration, as shown in Fig. \ref{model2}. From \eqref{rate20}, we find $r_{u,k} = r_{u,k}(\boldsymbol{x})\sin\hat{\theta}/\sin(\hat{\theta}-\theta)$. By changing the integral variable $r_u$ to $\theta$, we can write \eqref{rate2} as 
	\begin{align}\label{rate21}
		\!\!\!\!\!\bar{R}^c(u,k) \!=\! \delta_2\!\!\int_{0}^{\theta_k}\!\!\log\!\left(1\!+\! \delta_1\sin^{\alpha}(\hat{\theta}-\theta)\right)\frac{\cos(\hat{\theta}-\theta)}{\sin^2(\hat{\theta}-\theta)}d\theta,\vspace{-.1cm}
	\end{align}
	where $\delta_1 = \beta P_t\psi(r_{u,k}(\boldsymbol{x})\sin\hat{\theta})^{-\alpha}/wN_0$ and $\delta_2=wr_{u,k}(\boldsymbol{x})\sin\hat{\theta}\mathbbm{P}^c_{k}(N_k,\theta_k)/v_ut^c$. Next, we can directly conclude \eqref{rate18} from \eqref{rate21} by substituting $f(\theta) = \sin(\hat{\theta}-\theta)$ in \eqref{rate21}. For $\alpha=2$, which is a typical value for the path loss exponent for LoS mmW links, \eqref{rate18} can be simplified into \eqref{rate19} by taking the integration by parts in \eqref{rate18}.
\end{proof}
\subsection{Achievable gains of caching for mobility management}
From \eqref{traffic1}, \eqref{traffic2}, and \eqref{rate19}, we can find $d^c(u,k)$ which is the distance that MUE $u$ can traverse, while using the cached video content. On the other hand, by having the average inter-cell distances in a HetNet, we can approximate the number of SBSs that an MUE can pass over distance $d^c(u,k)$. Hence, the average number of SBSs that MUE is able to traverse without performing cell search for HO is
\begin{align}\label{perf1}
\zeta\approx\bigg\lfloor \frac{\mathbbm{E}\left[d^c(u,k)\right]}{l}\bigg\rfloor, 
\end{align}
where the expected value is taken, since $d^c(u,k)$ is a random variable that depends on  $\theta_u$. Moreover, $l$ denotes the average inter-cell distance. Here, we note that 
\begin{align}\label{perf2}
\mathbbm{E}\left[d^c(u,k)\right] = \int_{0}^{\infty}(1-F_{t_u^c}(v_ut))dt,	
\end{align}
where $F_{t^c}(.)$ is derived in Lemma \ref{prop2}. We note that \eqref{perf2} is the direct result of writing an expected value in terms of CDF. Based on the definition of $\zeta$ in \eqref{perf1}, we can conclude the following.  
\begin{remark}
	The proposed caching scheme will reduce the average number of HOs by $1/\zeta$ factor.
\end{remark}
Furthermore, from the definition of HOF $\gamma_{\text{HOF}}$ in \eqref{HOF1}, we can define the probability of HOF as $\mathbbm{P}(D_{u,k}<v_ut_{\text{MTS}})$ \cite{6849322}, where $D_{u,k}=t_{u,k}/v_u$, and $t_{u,k}$ is the ToS. To compute the HOF probability, we use the probability density function (PDF) of a random chord length within a circle with radius $a$, as follows:\vspace{-.1cm}
\begin{align}\label{perf3}
	f_D(D)=\frac{2}{\pi\sqrt{4a^2-D^2}},
\end{align}
where \eqref{perf3} relies on the assumption that one side of the chord is fixed and the other side is determined by choosing a random $\theta \in \left[0,\pi\right]$. This assumption is in line with our analysis as shown in Fig. \ref{model2}. Using \eqref{perf3}, we can find the probability of HOF as follows:
\begin{align}
	\mathbbm{P}(D_{u,k}<v_ut_{\text{MTS}})&=\int_{0}^{v_ut_{\text{MTS}}} \frac{2}{\pi\sqrt{4a^2-D^2}}dD,\\
	&=\frac{2}{\pi}\arcsin\left(\frac{v_ut_{\text{MTS}}}{2a}\right).
	\end{align}
In fact, $\gamma_{\text{HOF}}$ is a Bernoulli random variable with a probability of success that depends on the MUE's speed, cell radius, and $t_{\text{MTS}}$. Hence, by reducing the number of HOs by $1/\zeta$ factor, the proposed scheme will reduce the expected value of the sum $\sum \gamma_{\text{HOF}}$, taken over all SBSs that an MUE visits during the considered time $T$. 
\vspace{-0cm}
\section{Simulation Results}
\vspace{-0cm}
For simulations, we consider a HetNet composed of $K=50$ SBSs distributed uniformly across a circular area with radius $500$ meters with the MBS located at the center and a minimum inter-cell distance of $30$ meters. The main parameters are summarized in Table \ref{tab1}. In our simulations, we consider the overall transmit-receive antenna gain from an interference link to be random. All statistical results are averaged over a large number of independent runs. 
\begin{table}[t!]
	\scriptsize
	\centering
	\caption{
		\vspace*{-0em}Simulation parameters}\vspace*{-1em}
	\begin{tabular}{|c|c|c|}
		\hline
		\bf{Notation} & \bf{Parameter} & \bf{Value} \\
		\hline
		$f_c$ & Carrier frequency & $73$ GHz\\
		\hline
		$P_{t,k}$ & Total transmit power of SBSs & $\left[20, 27, 30\right]$ dBm\\
		\hline
		$K$ & Total number of SBSs & $50$\\
		\hline
		$w$ & Available Bandwidth & $5$ GHz\\
		\hline
		($\alpha_{\text{LoS}}$,$\alpha_{\text{NLoS}}$) & Path loss exponent& ($2,3.5$) \cite{Ghosh14}\\
		\hline
		$d_0$ & Path loss reference distance& $1$ m \cite{Ghosh14}\\
		\hline
		$G_{\text{max}}$ & Antenna main lobe gain& $18$ dB \\
		\hline
		$G_{\text{min}}$ & Antenna side lobe gain& $-2$ dB \\
		\hline
		$N_k$ & Number of mmW beams& $3$ \\
		\hline
		$\theta_{m}, \theta_k$ & beam width& $10^{\circ}$ \\
		\hline
		$N_0$ & Noise power spectral density& $-174$ dBm/Hz\\
		\hline
		$t_{\text{MTS}}$ & Minimum time-of-stay& $1$s \cite{3gpp} \\
		\hline
		$Q$ & Play rate& $1$k segments per second \\
		\hline
		$B$ & Size of video segments& $1$ Mbits  \\
		\hline
		$v_u$ & MUE speed& $\left[3, 10, 30, 45, 60\right]$ km/h  \\
		\hline
	\end{tabular}\label{tab1}\vspace{-0.4cm}
\end{table}

Fig. \ref{main1sim} compares the average HOF of the proposed scheme with a conventional HO mechanism that relies on the average RSS to perform HO and does not exploit caching. The results clearly demonstrate that caching capabilities, as proposed here, will significantly improve the HO process for dense HetNets. In fact, the results in Fig. \ref{main1sim} show that caching over mmW frequencies will reduce HOF for all speed, reaching up to $45 \%$ for MUEs with $v_u=60$ km/h. 
 \begin{figure}[!t]
 	\centering
 	\centerline{\includegraphics[width=9cm]{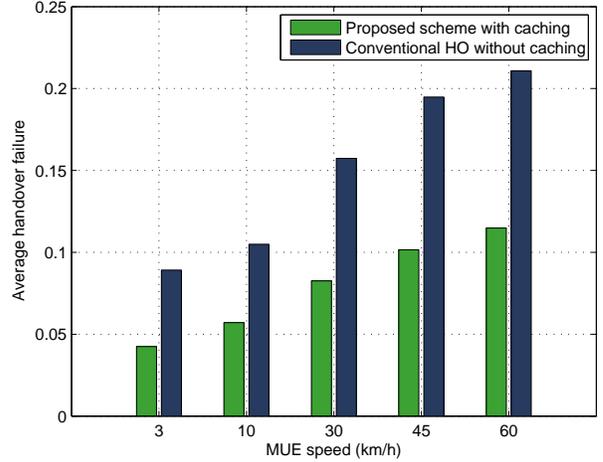}}\vspace{-.2cm}
 	\caption{\footnotesize HOF vs different MUE speeds.}\vspace{-0.3cm}
 	\label{main1sim}
 \end{figure}\vspace{0em}
  
Furthermore, Fig. \ref{rate} shows the achievable rate of caching for an MUE with $v_u=60$ km/h, as function of different initial distances $r_{u,k}(\boldsymbol{x})$ for various $\theta_u$. The results in Fig. \ref{rate} show that even for MUEs with high speeds, the achievable rate of caching is significant, exceeding $10$ Gbps, for all $\theta_u$ values and inital distance of $20$ meters from the SBS. However, we can observe that the blockage can noticeably degrade the performance. In fact, for NLoS scenarios,  the maximum achievable rate at the distance of $20$ meters reduces to $2$ Gbps.\vspace{-0cm}
 \section{Conclusions}\label{conclude}\vspace{-0cm}
In this paper, we have proposed a comprehensive framework for mobility management in integrated microwave-millimeter wave networks. In particular, we have shown that by smartly caching video contents while exploiting the dual-mode nature of the network's base stations, one can provide seamless mobility to the users. We have derived various fundamental results on the probability and the achievable rate for caching video contents by leveraging millimeter wave high capacity transmissions. We have shown that caching provides significant gains in reducing the number of handovers. Numerical results have corroborated our analytical results and showed that the significant rates for caching can be achieved over the mmW frequencies, even for fast mobile users. The results also have shown that the proposed approach substantially decreases the handover failures in heterogeneous networks.

 \begin{figure}[!t]
 	\centering
 	\centerline{\includegraphics[width=9cm]{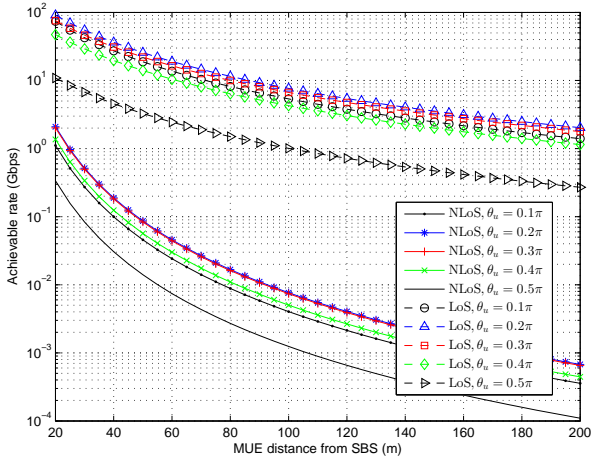}}\vspace{-.4cm}
 	\caption{\footnotesize Achievable rate of caching vs $r_{u,k}(\boldsymbol{x})$ for different $\theta_u$.}\vspace{-.3cm}
 	\label{rate}
 \end{figure}\vspace{0em}
\vspace{-0.15cm} 
 \section*{Appendix A}\label{app1}\vspace{-0.15cm} 
 Due to the equidistant beams, we have 
 \begin{align}\label{app1-1}
 	\widehat{AoB} &= \frac{1}{2}AB =\frac{1}{2}\left[2\pi-AoB\right]\\\notag
 	 &= \frac{1}{2}\left[2\pi - \left(\frac{2\pi}{N_k}-\theta_k\right)\right]
 =\left(1-\frac{1}{N_k}\right)\pi + \frac{\theta_k}{2}.
 \end{align}
 Given that an arbitrary MUE can enter the circle in Fig. \ref{model2} with any direction, it will be instantly covered by mmW with probability $\mathbbm{P}(\boldsymbol{x}_u \in \mathcal{A}) \!=\! \frac{N_k\theta_k}{2\pi}$, where $\mathcal{A} \!\subset \!\mathbbm{R}^2$ denotes the part of circle's perimeter that overlaps with mmW beams. Thus, \vspace{-.2cm}
 \begin{align}\label{app1-2}
 \!\!\!	\mathbbm{P}^c_{k}(N_k,\theta_k) = \mathbbm{P}(\boldsymbol{x}_u \in \mathcal{A}) + \left[1-\mathbbm{P}(\boldsymbol{x}_u \in \mathcal{A})\right]\frac{1}{2\pi}\widehat{AoB},
 \end{align}
where \eqref{app1-2} is resulted from the fact that $\theta_u \sim U\left[0, 2\pi\right]$. Therefore, from \eqref{app1-1} and \eqref{app1-2}, the probability of crossing a mmW beam follows \eqref{prop1eq}.
\vspace{-.3cm}
 \section*{Appendix B}\label{app2}
 From \eqref{caching1}, $F_{t^c}(t_0)=\mathbbm{P}(r_u^c\leq v_u t_0)$. To find this probability, we note that $r_u^c\leq v_u t_0$ if MUE moves between two line segments of length $v_ut_0$ that connect MUE to line $y=x\cos\theta_0$. Depending on $r_{u,k}(\boldsymbol{x})$, the intersection of line segment with $y=x\cos\theta_0$ may have one or two solutions. In case of two intersection points, the two line segments will make two equal angles with the perpendicular line from $\boldsymbol{x}_u$, to $y=x\cos\theta_0$, which each is obviously equal to $\pi-(\pi/2-\theta_k)-\hat{\theta}=\pi/2+\theta_k-\hat{\theta}=\arccos\left(\frac{r_u^{\text{min}}}{v_ut_0}\right)$. Therefore, 
 \begin{align}\label{aaaeq1}
 	F_{t^c}(t_0)=\frac{2}{\pi-\theta_k}\arccos\left(\frac{r_u^{\text{min}}}{v_ut_0}\right).
 \end{align}
 In fact, $\theta_u$ must be within a range of $\pi-\theta_k$ for $r_u^c\leq v_ut_0$ to be valid. 
 Now, if this angle is greater than $\pi/2-\theta_k$, only one intersection point exists. Equivalently,
  \begin{align}\label{aaaeq2}
  \!\!\!\!F_{t^c}(t_0)\!=\!\frac{1}{\pi-\theta_k}\!\bigg(\!\!\arccos\left(\frac{r_u^{\text{min}}}{v_ut_0}\right)\!+\!\arccos\left(\frac{r_u^{\text{min}}}{r_{u,k}(\boldsymbol{x})}\right)\!\!\bigg).
  \end{align}
 Integrating \eqref{aaaeq1} and \eqref{aaaeq2}, the CDF for caching duration can be written as \eqref{caching5}.\vspace{-.1cm} 

\vspace{-0em}
\def\baselinestretch{0.92}
\bibliographystyle{ieeetr}
\bibliography{references}
\end{document}